\definecolor{brown}{rgb}{0.88,.5,0}
\theoremstyle{plain}
\newtheorem{lemma}{Lemma}
\theoremstyle{definition}
\theoremstyle{Remark}
\newtheorem{remark}{Remark}
\DeclareMathOperator{\divop}{div}
\newcommand{\ud}{\,\mathrm{d}}
\newcommand{\NN}{\mathbb{N}}
\newcommand{\ZZ}{\mathbb{Z}}
\newcommand{\wt}[1]{\widetilde{#1}}
\newcommand{\wh}[1]{\widehat{#1}}
\newcommand{\veps}{\varepsilon}
\newcommand{\eps}{\epsilon}
\newcommand{\abs}[1]{\lvert#1\rvert}
\newcommand{\average}[1]{\left\langle#1\right\rangle}
\newcommand{\CB}{\mathrm{CB}}
\begin{document}

\title{Traction Boundary Conditions for Molecular Static Simulations}

\author{Xiantao Li} \address{Department of Mathematics, The
  Pennsylvania State University, University Park, PA 16802}

\author{Jianfeng Lu} \address{Departments of Mathematics, Physics, and
  Chemistry, Duke University, Durham, NC 27708}
\begin{abstract}
This paper presents a consistent approach to prescribe traction boundary conditions in atomistic models. 
Due to the typical multiple-neighbor interactions, finding an appropriate boundary condition that models a desired traction is a non-trivial task. We first present
a one-dimensional example, which demonstrates how such boundary conditions can be formulated. We further analyze  the stability, and derive its continuum limit. We also show how the boundary conditions can be extended to higher dimensions with an application to a dislocation dipole problem under shear stress.      
\end{abstract}


\maketitle

\section{Introduction}

Atomistic models have established a critical role in material modeling
and simulations. In order to study the mechanical responses, boundary
conditions (BC) must be imposed.  While specifying the displacement of
the atoms at the boundary is straightforward, imposing a traction is
much more challenging due to the fact that the range of the atomic
interactions typically goes beyond nearest neighbors.  In direct
contrast to continuum mechanics, where the boundary is of lower
dimension (curves or surfaces), the `boundary' in an atomistic model
often consists of a few layers of atoms. As a result, there are
multiple ways to prescribe a traction BC. For instance, forces can be
applied directly to atoms at the boundary in such a way that they add
up to the given traction. However, it is unclear how to distribute
these forces among the atoms. In particular, boundary layers may
develop and create large modeling error.

Meanwhile, many mathematical problems associated with material defects have been formulated as a system 
under traction. Examples include cracks under mode-I loading  \cite{sih1968fracture}, where uniform stress 
can be specified in the far field, and dislocations under shear stress \cite{hirth1982theory}, which led to the important concepts
of Peierls barrier. Problems of this type can not simply be treated with BCs that prescribe the displacement of the atoms at the boundary. Another possible approach to introduce traction is the Parrinello-Rahman method \cite{parrinello1981polymorphic}, where the stress is created by allowing the shape of the simulation cell to 
change, which is particularly useful when phase transformation processes occur. But the method is limited to periodic cells, and it can not treat material defects without introducing artificial images. 

The purpose of this paper is to formulate a proper BC that represents a traction force along 
the boundary. We set up the problem by embedding the computational domain within an infinite molecular system, where the traction
in the far-field can be introduced. This is motivated by the observation that molecular simulations are 
typically conducted within part of the entire sample, due to the heavy computational cost. Mathematically, the extra degrees of freedom in the surrounding region can be eliminated
by solving the finite difference equations associated with the molecular statics model. This gives rise to 
a BC, which is expressed as an extrapolation of the displacement to the atoms outside the boundary,
along with a shift vector, which depends on the traction in the far field. We further demonstrate that the typical approach in 
which external forces are directly applied at the boundary might be incompatible with these BCs, and that they can lead to
ill-posed problems. 

The present approach allows one to simulate a material system with local defects under traction load, which mimics a surround
elastic medium.
Another potential application is to the domain decomposition (DD) 
method for solving a large-scale molecular system, where the problem is divided into sub-problems, each of which is associate with
a sub-domain.  In particular, the Dirichlet-Neumann method and Neumann-Neumann method (e.g., see \cite{toselli2005domain}) offer a coupling strategy without creating overlapping regions. Our method can be implemented within the DD framework to fascilitate parallelization. 

The paper is organized as follows: We first consider a one-dimensional system to demonstrate how the BC can be derived. We further analyze the stability of the resulting boundary value problem and the continuum limit. As an application, and a demonstration of such BCs in high dimensions, we consider a dislocation dipole problem in section \ref{sec: 2d}. We close the paper by a summary and some discussions.

\section{A one-dimensional example}

To better illustrate the idea, let us first consider a one-dimensional
semi-infinite chain of atoms with undeformed position $x_i = i$, where
$i \in \ZZ_+ \cup \{0\}$. We will also use the undeformed position to
label the atoms. The deformed position is denoted by $y_i$ with displacement $u_i$. We assume
that the atomistic potential has next-nearest-neighbor pairwise interactions. Namely, the total energy in the bulk can 
be written as,
\[ E= \sum_{i \geq 0} \bigl( V(y_{i+2} - y_{i}) +
  V(y_{i+1} - y_{i})  \bigr).\] 
The extension to more general potentials and higher dimensions will be
discussed later.

Intuitively, there are at least two ways to impose a traction at the
boundary. For instance, one may apply forces, denoted by $T_0$ and $T_1$, to the
first two atoms, located at $y_0$ and $y_1,$ respectively.  Alternatively, one
can introduce two additional atoms outside the boundary, which in the
present case, have current positions $y_{-2}$ and $y_{-1}$. These additional atoms will be referred to as {\it ghost atoms},
since they play different roles as the atoms in
the interior.  By specifying $y_{-2}$ and $y_{-1}$ (or $u_{-2}$ and $u_{-1}$), one also creates a traction at the boundary. These two
methods are illustrated in Fig.~\ref{fig:chain}. 
\begin{figure}[ht]
\begin{center}
\hspace{2cm}\scalebox{0.45}{\includegraphics{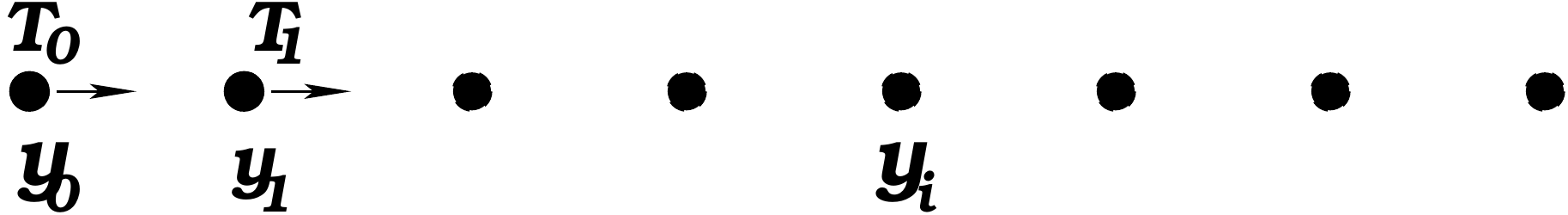}}
\end{center}
\vspace{0.5cm}
\begin{center}
\scalebox{0.45}{\includegraphics{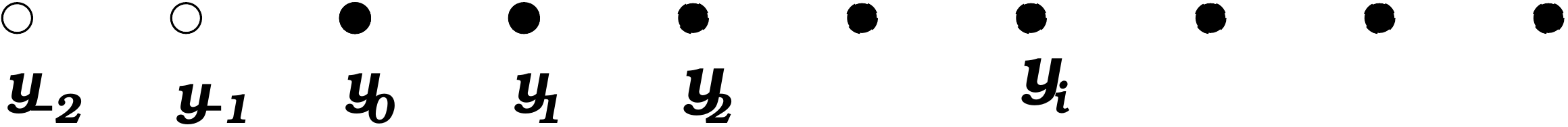}}
\caption{A one-dimensional chain with next nearest-neighbor interactions.
Top:  Tractions, in the form of point forces, are applied to the two atoms at the
left boundary; Bottom: two ghost atoms  are introduced outside the boundary. \label{fig:chain}
}
\end{center}
\end{figure}
\begin{remark}
At this point, both methods seem plausible, and it is not immediately
clear how they are related to each other. Also notable is that there
are {\it two} parameters available to specific {\it one} traction
condition. This will be clarified in the next section.
\end{remark}

We illustrate possible BCs with the following numerical experiment: We
consider the one-dimensional chain model with harmonic
interactions. The force constants are $\kappa_1=1$ and $\kappa_2=-0.2$
for the nearest neighbor and next nearest neighbor interactions. A
traction force needs to be applied at the left boundary, while the
atoms at $x_N$ and $x_{N+1}$ are fixed. We choose $N=20.$ Three
traction BCs are tested. In the first case, we apply a unit force on
the first atom, and in the second case, we apply the same force to the
second atom. In the third test, we split the force among the first two
atoms ($\frac12$ and $\frac12$). For such a simple setup, one would
anticipate that the corresponding continuum model has a simple
solution which is given by a uniform deformation gradient. These
results are shown in Fig. \ref{fig:4bcs}. In all these cases, the
solutions develop a boundary layer, and none of them is fully
consistent with the continuum solution.  As comparison, we include the
result from the traction BC that will be derived later, in which the
position of the first two atoms is determined based on the given
traction.  It is clear that the boundary layer has been eliminated.
\begin{figure}[ht]
\begin{center}
{\includegraphics[width=12cm,height=6cm]{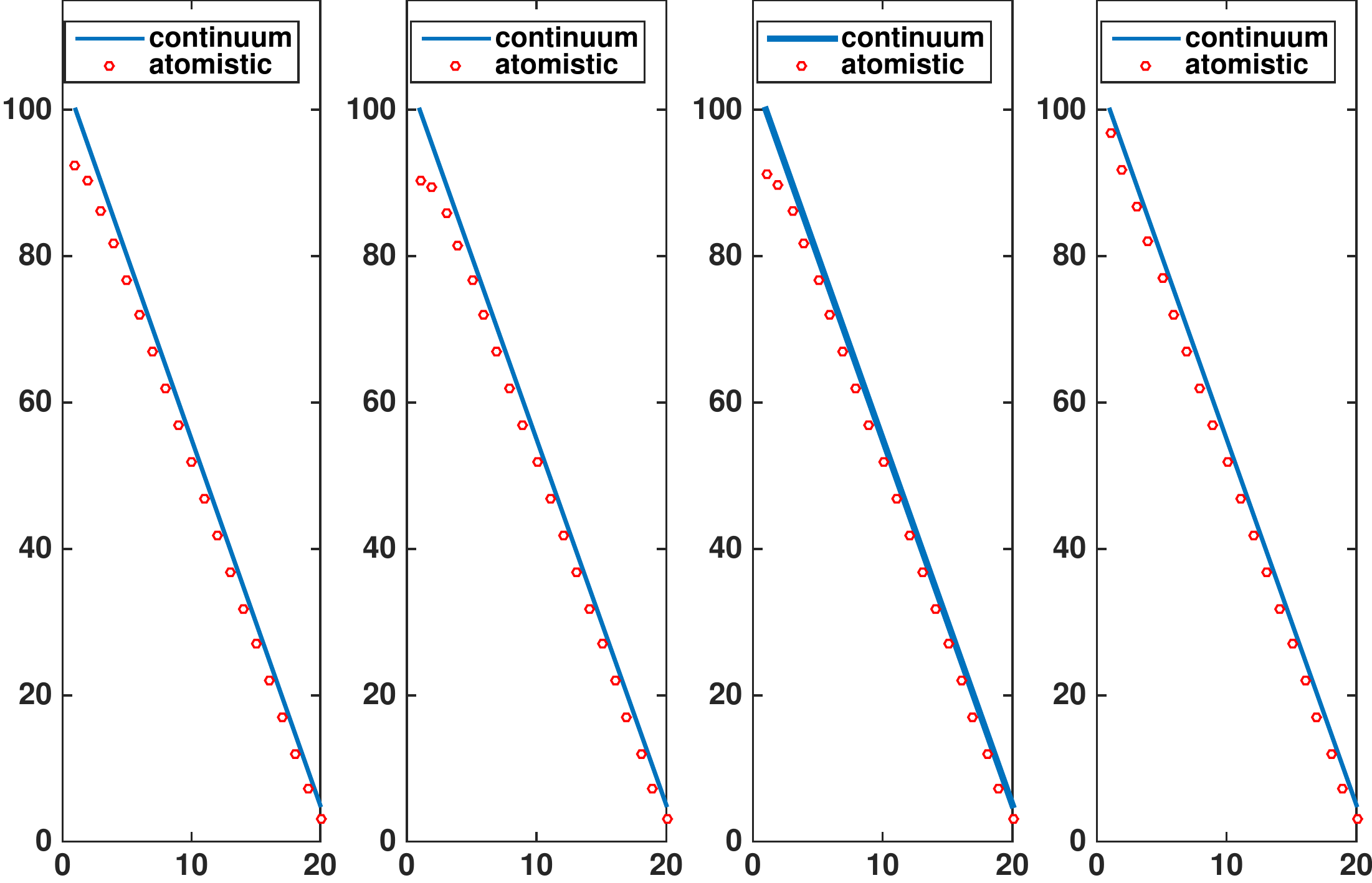}}
\caption{Testing several boundary conditions for the one-dimensional model. 
Left to right: A unit force applied to the first atom; A unit force applied to the second atoms; Forces applied to both atoms; 
Traction boundary condition that will be derived in the next section.
\label{fig:4bcs}
}
\end{center}
\end{figure}

\subsection{The derivation of the traction BC}

To understand the traction BC, we start  by embedding
the semi-infinite atom chain into an infinite chain. We recall that $x_i =
i$, $i \in \ZZ$ denotes the equilibrium positions. We take the view point that
the BCs acting on the atom at $x_0$ should be determined by
the interaction of the semi-infinite chain with the atoms on the left ($i<0$),
whose degrees of freedom will be implicitly incorporated. In other words, the
atoms $x_i$, $i \in \ZZ_-$ serves as an environment for the system we
consider. We will hence distinguish the two groups of atoms by referring them as system atoms and environment
atoms, respectively. This is based on our observation that most atomistic simulations are focused on
part of the entire sample due to the small spatial scale associated with molecular models.  

The problem now is reduced to removing the atoms in the environment. 
To facilitate the reduction of degrees of freedom from the whole chain
to the semi-infinite, we will take the harmonic approximation. This
amounts to assuming that the interaction between the atoms $x_i$ and
$x_j$ is harmonic if either $i < 0$ or $j < 0$. In this paper, we focus our attention to static problems, which can
be formulated as an energy minimization problem. For the current problem,  the potential energy will be divided into
several terms in accordance with the partition of the system,
\begin{equation}
  E = E_{\text{sys}} + E_{\text{int}} + E_{\text{env}} 
\end{equation} 
where $E_{\text{sys}}$ is the interaction among the atoms in the
semi-infinite chain on the right: 
\begin{equation}
  E_{\text{sys}} = \sum_{i \geq 0} \bigl( V(y_{i+2} - y_{i}) +
  V(y_{i+1} - y_{i})  \bigr). 
\end{equation}
Meanwhile $E_{\text{int}}$ collects the interaction terms between a system atom
and an environmental atom. In terms of the displacement $u_j = y_j -
x_j$, we have,
\begin{equation}
  E_{\text{int}} = \frac{\kappa_1}{2} (u_0 - u_{-1})^2 + \frac{\kappa_2}{2} (u_0 - u_{-2})^2 + \frac{\kappa_2}{2} (u_1 - u_{-1})^2, 
\end{equation}
where $\kappa_1 = V''(1)$ and $\kappa_2 = V''(2)$ are two stiffness
constants for nearest-neighbor and next-nearest-neighbor
interactions. Further, $E_{\text{env}}$ denotes the energy for the environment,
given by
\begin{equation}
  E_{\text{env}} = \sum_{i < 0} \bigl[ \frac{\kappa_1}{2} (u_i - u_{i-1})^2  + \frac{\kappa_2}{2} (u_i - u_{i-2})^2 \bigr]. 
\end{equation}

Since the interaction is assumed to be next-nearest-neighbor, the
environment atoms only interact with atoms with reference positions $x_0$ and $x_1$, but not 
other system atoms. Given the displacement $u_0$ and $u_1$, the force balance equations for the environment atoms can be written as 
\begin{equation}\label{eq:fdhalfchain}
  \kappa_2 (u_{j+2} - 2 u_j + u_{j-2}) + \kappa_1 (u_{j+1} - 2 u_j + u_{j-1}) = 0, 
  \qquad j \in \ZZ_-. 
\end{equation}
The general solution of this finite difference equation is given by, 
\begin{equation}
  u_j = A + B j + C \lambda^j + D \lambda^{-j}, 
\end{equation}
where $\lambda$ is a root of the characteristic polynomial associated
to \eqref{eq:fdhalfchain} (the other roots are $1/\lambda$ and $1$
with multiplicity $2$):
\begin{equation}\label{eq:lambda}
  \lambda = -1 - \dfrac{\kappa_1}{2\kappa_2} \left [1 - \sqrt{1+ \frac{4\kappa_2}{\kappa_1}}\,\right].
\end{equation}
We collect some elementary properties of $\lambda$ in the following lemma.
\begin{lemma}\label{lem:lambda}
  Assume $\kappa_1 > 0$ and $\kappa = \kappa_1 + 4 \kappa_2 > 0$, we
  have $\abs{\lambda} \leq 1$ and $-\kappa_2 \lambda \geq 0$.
\end{lemma}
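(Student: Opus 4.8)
The plan is to reduce everything to elementary estimates on the nontrivial roots of the characteristic polynomial. Substituting $u_j = \mu^j$ into \eqref{eq:fdhalfchain} and dividing by $\mu^{j-2}$ gives $\kappa_2(\mu^2-1)^2 + \kappa_1\mu(\mu-1)^2 = 0$, which factors as $(\mu-1)^2\bigl[\kappa_2\mu^2 + (2\kappa_2+\kappa_1)\mu + \kappa_2\bigr] = 0$. This isolates the double root $\mu=1$ and reduces the statement to the quadratic $q(\mu) := \kappa_2\mu^2 + (2\kappa_2+\kappa_1)\mu + \kappa_2$, whose two roots multiply to $1$ (they are $\lambda$ and $1/\lambda$) and whose discriminant is $(2\kappa_2+\kappa_1)^2 - 4\kappa_2^2 = \kappa_1(\kappa_1+4\kappa_2) = \kappa_1\kappa$. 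At the outset I would record the inequality $2\kappa_2+\kappa_1 = \tfrac12(\kappa_1+\kappa) > 0$, which follows immediately from $\kappa_1>0$ and $\kappa>0$ and will be used repeatedly, and set $s := \sqrt{\kappa_1\kappa} > 0$, so that by the quadratic formula the root named in the lemma is $\lambda = \bigl(-(2\kappa_2+\kappa_1)+s\bigr)/(2\kappa_2)$.

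For the sign statement $-\kappa_2\lambda \geq 0$, I would simply compute $-\kappa_2\lambda = \tfrac12\bigl(2\kappa_2+\kappa_1 - s\bigr)$ and thereby reduce the claim to $2\kappa_2+\kappa_1 \geq s$. Since the left-hand side is positive, this can be verified by squaring: $(2\kappa_2+\kappa_1)^2 - s^2 = 4\kappa_2^2 \geq 0$. This part is essentially immediate once the quadratic is in hand.

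For $\abs{\lambda} \leq 1$, rather than splitting into cases according to the sign of $\kappa_2$, I would work with $\lambda^2 - 1 = (\lambda-1)(\lambda+1)$ directly. A short computation gives $\lambda+1 = (s-\kappa_1)/(2\kappa_2)$ and $\lambda-1 = (s-\kappa)/(2\kappa_2)$, so $\lambda^2 - 1 = (s-\kappa_1)(s-\kappa)/(4\kappa_2^2)$. Expanding the numerator and using $s^2 = \kappa_1\kappa$, it equals $2s^2 - s(\kappa_1+\kappa) = -s\bigl(\kappa_1+\kappa-2s\bigr) = -s\,(\sqrt{\kappa_1}-\sqrt{\kappa})^2 \leq 0$, the last step being the AM--GM inequality $\kappa_1+\kappa \geq 2\sqrt{\kappa_1\kappa}$. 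Since $4\kappa_2^2 > 0$, this yields $\lambda^2 \leq 1$ as desired.

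The only genuine subtlety---and the step I would be most careful about---is that $\kappa_2$ may have either sign (only $\kappa_1$ and $\kappa=\kappa_1+4\kappa_2$ are assumed positive), so that dividing or squaring an inequality can reverse its direction and a naive case split gets messy. The $\lambda^2-1$ computation is attractive precisely because it sidesteps this: it never divides by $\kappa_2$ of unknown sign, only by $4\kappa_2^2 > 0$, and it automatically selects $\lambda$ rather than its reciprocal $1/\lambda$ as the root lying in the unit disk. I would also note in passing that equality $\abs{\lambda}=1$ forces $\kappa_1=\kappa$, i.e. $\kappa_2=0$, which is exactly the degenerate case implicitly excluded (the formula for $\lambda$ divides by $\kappa_2$).
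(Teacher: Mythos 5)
Your proof is correct, but it takes a genuinely different route from the paper's. The paper works directly with the closed-form expression \eqref{eq:lambda} and splits into the two cases $\kappa_2 > 0$ and $\kappa_2 < 0$: starting from the elementary bound $1 + 4\kappa_2/\kappa_1 \leq (1+2\kappa_2/\kappa_1)^2$, it shows $\lambda \in [-1,0]$ in the first case and $\lambda \in [0,1]$ in the second, which delivers both conclusions at once (the sign of $\lambda$ is opposite to that of $\kappa_2$). You instead return to the characteristic quadratic $\kappa_2\mu^2 + (2\kappa_2+\kappa_1)\mu + \kappa_2$, identify the paper's $\lambda$ as its root $\bigl(-(2\kappa_2+\kappa_1)+\sqrt{\kappa_1\kappa}\bigr)/(2\kappa_2)$ (this identification is right: expanding \eqref{eq:lambda} and using $\kappa_1\sqrt{\kappa/\kappa_1} = \sqrt{\kappa_1\kappa}$, valid since $\kappa_1>0$, gives exactly that expression), and then prove both claims by case-free identities: $-\kappa_2\lambda = \tfrac12\bigl(2\kappa_2+\kappa_1-\sqrt{\kappa_1\kappa}\bigr)$, nonnegative because $(2\kappa_2+\kappa_1)^2 - \kappa_1\kappa = 4\kappa_2^2$; and $\lambda^2-1 = -\sqrt{\kappa_1\kappa}\,(\sqrt{\kappa_1}-\sqrt{\kappa})^2/(4\kappa_2^2) \leq 0$ by AM--GM. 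What your version buys: it never multiplies an inequality by a quantity of unknown sign (only divides by $4\kappa_2^2>0$), so no case split is needed; it makes transparent why it is $\lambda$ rather than $1/\lambda$ that lies in the unit disk; and it pins down the equality cases (both inequalities are strict unless $\kappa_2=0$, the degenerate case where $\lambda$ is undefined), which matches the paper's later use of the strict statements $\abs{\lambda}<1$ and $-\kappa_2\lambda(1-\lambda)>0$. What the paper's version buys: it is entirely self-contained from the formula \eqref{eq:lambda}, requiring no re-derivation of the characteristic polynomial, and its interval conclusions $\lambda\in[-1,0]$ for $\kappa_2>0$ and $\lambda\in[0,1]$ for $\kappa_2<0$ are a slightly more explicit form of the sign information.
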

\begin{proof} 
  First notice that 
  \begin{equation*}
    1 + \frac{4 \kappa_2}{\kappa_1} \leq 
    1 + \frac{4 \kappa_2}{\kappa_1} + \frac{4 \kappa_2^2}{\kappa_1^2}= \Bigl( 1 + \frac{2 \kappa_2}{\kappa_1}\Bigr)^2, 
  \end{equation*}
  which yields
  \begin{equation*}
    1 - \sqrt{1 + \frac{4 \kappa_2}{\kappa_1}} \geq - \frac{2 \kappa_2}{\kappa_1}.
  \end{equation*}

  For $\kappa_2 > 0$, we then get
  \begin{equation*}
    0 \leq - \frac{\kappa_1}{2 \kappa_2} \biggl( 1 - \sqrt{1 + \frac{4 \kappa_2}{\kappa_1}} \biggr) \leq 1.
  \end{equation*}
  Hence, by definition of $\lambda$ in \eqref{eq:lambda}, we get
  $\lambda \in [-1, 0]$.

  For $\kappa_2 < 0$, we have
  \begin{equation*}
    - \frac{\kappa_1}{2 \kappa_2} \biggl( 1 - \sqrt{1 + \frac{4 \kappa_2}{\kappa_1}} \biggr) \geq 1.
  \end{equation*}
  This yields $\lambda \geq 0$. Since $1 + 4 \kappa_2 / \kappa_1 < 1$, we also have 
  \begin{equation*}
    \sqrt{1 + \frac{4 \kappa_2}{\kappa_1}} \geq 1 + \frac{4\kappa_2}{\kappa_1}
  \end{equation*}
  and hence 
  \begin{equation*}
    - \frac{\kappa_1}{2 \kappa_2} \biggl( 1 - \sqrt{1 + \frac{4 \kappa_2}{\kappa_1}} \biggr) \leq 
    - \frac{\kappa_1}{2 \kappa_2} \biggl( 1 - 1 - \frac{4 \kappa_2}{\kappa_1} \biggr) = 2. 
  \end{equation*}
  We conclude that $\lambda \leq 1$.
\end{proof}
Since $\abs{\lambda} < 1$, $\lambda^{j}$ grows exponentially as $j \to
-\infty$ and hence unphysical. This leads to the requirement that $C =
0$.  The positions of the atoms at $x_0$ and $x_1$ further provide two
BCs for \eqref{eq:fdhalfchain}. The remaining one degree of freedom is
determined by the traction at the boundary,
\begin{equation}\label{eq:tractioncondition}
  - \kappa_1 (u_{0} - u_{-1}) - \kappa_2 (u_{0} - u_{-2}) - \kappa_2 (u_{1} - u_{-1}) = T, 
\end{equation}
where $T$ is the prescribed traction at the boundary (scalar in $1D$).  Intuitively, the traction across a material
interface is given by the sum of the forces between two atoms that are on different sides of the interface \cite{Admal2010unified,WuLi14}. This is indeed non-trivial, especially for multi-body interactions. But formulas are available for most empirical potentials \cite{WuLi14}. 
\begin{remark}
  We further remark that the traction is conserved since no external
  force acts on the fictitious atoms: For $j \in \ZZ_-$,
  \begin{multline*}
    \Bigl(- \kappa_1 (u_{j} - u_{j-1}) - \kappa_2 (u_{j} - u_{j-2})
    \Bigr) - \kappa_2 (u_{j+1} - u_{j-1}) \\
    \stackrel{\eqref{eq:fdhalfchain}}{=}
    \Bigl(- \kappa_1 (u_{j+1} - u_{j}) - \kappa_2 (u_{j+2} - u_{j}) \Bigr) - \kappa_2 (u_{j+1} - u_{j-1}) \\
    = - \kappa_1 (u_{j+1} - u_{j}) - \kappa_2 (u_{j+1} - u_{j-1}) -
    \kappa_2 (u_{j+2} - u_{j}).
  \end{multline*}
\end{remark}

The solution to \eqref{eq:fdhalfchain} can now be found. In
particular, the coefficients are
\begin{equation*}
  A = \frac{-\lambda^{-1} u_0 + u_1 + T / \kappa}{1 - \lambda^{-1}}, \quad B = - T / \kappa, \quad \text{and} \quad C = \frac{u_0 - u_1 - T/ \kappa}{1 - \lambda^{-1}},
\end{equation*}
where $\kappa = \kappa_1 + 4 \kappa_2$. As a result, the
displacements $u_{-1}$ and $u_{-2}$ are given by 
\begin{subequations}\label{eq:tbc1d}
\begin{align}
  & u_{-1} = (1+ \lambda) u_0 - \lambda u_1 + (1  - \lambda) T / \kappa;  \\
  & u_{-2} = (1+ \lambda) u_{-1} - \lambda u_0 +
  (1 - \lambda) T / \kappa.
\end{align}
\end{subequations}
Notice that $u_{-1}$ and $u_{-2}$ depend linearly on the displacements
$u_0$, $u_1$ and the traction $T$. This is a result of the harmonic
approximation in the environment. Now that the degrees of freedom associated
with the atoms further on the left are removed, we can  formulate the boundary
value problem for the semi-infinite atom chain $x_i$, $i \in \ZZ_+ \cup
\{0\}$ in terms of ghost atoms $x_{-1}$ and $x_{-2}$ at the
boundary. In general, the number of needed layers of atoms is determined
by the interaction range.

With the BCs, the molecular statics model is complete. We 
consider a slightly more general problem by allowing body forces $f_i$ to be applied 
to the system atoms.  In this case,  the force balance equations read as follows,
\begin{subequations}\label{eq:forcebalance}
\begin{align}
  & - V'(y_{j+2} - y_j) - V'(y_{j+1} - y_j) + V'(y_j - y_{j-1}) +
  V'(y_j - y_{j-2}) = f_j, \qquad j \geq 2 \\
  & - V'(y_3 - y_1) - V'(y_2 - y_1) + V'(y_1 - y_0) +
  \kappa_2 (u_1 - u_{-1}) = f_1, \\
  & - V'(y_2 - y_0) - V'(y_1 - y_0) + \kappa_1 (u_0 - u_{-1}) +
  \kappa_2 (u_0 - u_{-2}) = f_0,
\end{align}
\end{subequations}
together with the BCs given by \eqref{eq:tbc1d}. 

Another observation is  that due to the semi-infinite nature,
\eqref{eq:forcebalance}-\eqref{eq:tbc1d} can only determine $u$ up to
a constant. To uniquely fix the arbitrary constant, we choose $u_0 =
0$. In addition, while the solution $u$ can be a linear function
that corresponds to a uniformly stretched (or compressed) state, it is natural to
exclude those solutions that grow superlinearly at infinity \cite{luskin2013atomistic}. Hence,
the complete set of BCs for the semi-infinite chain
consists of the traction BC \eqref{eq:tbc1d} and the conditions 
\begin{subequations}\label{eq:infbc1d}
\begin{align}
  & u_0 = 0; \\
  & \limsup_{j \to \infty} \frac{\abs{u_j}}{j} < \infty. 
\end{align}
\end{subequations}

We emphasize that the above two BCs are associated
with the {\it semi-infinite} chain under consideration, but not the
traction at the left end of the chain. For finite system with a right
boundary, appropriate BCs should be chosen to replace
\eqref{eq:infbc1d} according to the physical situation. Our emphasis,
however, is on the traction condition at the left boundary.

Let us summarize the general framework for our BC construction in one-dimensional systems as follows 
\begin{itemize}
\item[Step 1.] Supplement the system with a fictitious environment of
  atoms with linear approximation; 
\item[Step 2.] Solve the positions of the environmental atoms with the
  condition of fixed traction; 
\item[Step 3.] The BC of the atomistic system is then
  given in terms of the positions of the ghost atoms.
\end{itemize}
This procedure can be clearly generalized to one-dimensional atomistic systems
with arbitrary short-range interactions. The number of BCs depends on the interaction range.

Next, we turn to several properties of the BCs. These
will help us better understand the traction BC and
also facilitate the extension to higher dimension.

\subsection{The continuum limit}\label{sec:contlimit}
For continuum elasticity models, traction BCs are imposed
as the normal component of the stress.  In this subsection, we show that the continuum
limit of the reduced system \eqref{eq:forcebalance}, together with the
BCs \eqref{eq:tbc1d}, leads to the Cauchy-Born
elasticity with continuum traction BC in
elasticity. Hence, our BCs \eqref{eq:tbc1d} can be
viewed as the molecular statics analog of the traction boundary
condition in continuum elasticity.

To this end, we adopt the natural rescaling of the
system such that the distance between nearest-neighbor atoms in
equilibrium becomes $\veps$. We will use superscript to make explicit
the dependence on the scaling parameter $\veps$.  Hence, the
equilibrium positions scale to $x_j^{\veps} = j \veps$, $j \in \ZZ_+
\cup \{0\}$ and the deformed positions are $y_j^{\veps} = x_j^{\veps}
+ u^{\veps}(x_j^{\veps})$. We rewrite the force balance equation and
the traction BC accordingly:
\begin{subequations}\label{eq:forcebalanceeps}
  \begin{align}
    & - V'\Bigl(\frac{y_{j+2}^{\veps} - y_j^{\veps}}{\veps}\Bigr) -
    V'\Bigl(\frac{y_{j+1}^{\veps} - y_j^{\veps}}{\veps}\Bigr) +
    V'\Bigl(\frac{y_j^{\veps} - y_{j-1}^{\veps}}{\veps}\Bigr) +
    V'\Bigl(\frac{y_j^{\veps} - y_{j-2}^{\veps}}{\veps}\Bigr) = \veps
    f_j, \qquad j \geq 2
    \label{eq:atomj}\\
    & - V'\Bigl(\frac{y_3^{\veps} - y_1^{\veps}}{\veps}\Bigr) -
    V'\Bigl(\frac{y_2^{\veps} - y_1^{\veps}}{\veps}\Bigr) +
    V'\Bigl(\frac{y_1^{\veps} - y_0^{\veps}}{\veps}\Bigr) + \kappa_2
    \Bigl(\frac{u_1^{\veps} - u_{-1}^{\veps}}{\veps}\Bigr) = \veps
    f_1,
    \label{eq:atom1} \\
    & - V'\Bigl(\frac{y_2^{\veps} - y_0^{\veps}}{\veps}\Bigr) -
    V'\Bigl(\frac{y_1^{\veps} - y_0^{\veps}}{\veps}\Bigr) + \kappa_1
    \Bigl(\frac{u_0^{\veps} - u_{-1}^{\veps}}{\veps}\Bigr) + \kappa_2
    \Bigl(\frac{u_0^{\veps} - u_{-2}^{\veps}}{\veps}\Bigr) = \veps f_0
    \label{eq:atom0}
  \end{align}
\end{subequations}
with 
\begin{subequations}\label{eq:tbc1deps}
  \begin{align}
    & u_{-1}^{\veps} = (1+ \lambda) u_0^{\veps} - \lambda u_1^{\veps} + \veps (1 - \lambda) T / \kappa;  \\
    & u_{-2}^{\veps} = (1+ \lambda) u_{-1}^{\veps} - \lambda
    u_0^{\veps} + \veps (1 - \lambda) T / \kappa.
  \end{align}
\end{subequations}
We now take the continuum limit $\veps \to 0$. To the leading order, the
equation \eqref{eq:atomj} becomes
\begin{equation}\label{eq:contforcebalance}
  - \divop \bigl[ V'( y'(x)) + 2 V'(2 y'(x))\bigr] = f(x).
\end{equation}
Note that for the current atomistic interaction potential, the
Cauchy-Born stored energy density is given by \cite{EMing:2007a}:
\begin{equation}
  W_{\CB}(A) = V\bigl( I + A \bigr) +  V\bigl( 2 (I + A) \bigr). 
\end{equation}
Hence, \eqref{eq:contforcebalance} is exactly the force balance
equation for the Cauchy-Born elasticity since
\begin{equation}
  \partial_A W_{\CB}(u') = V'\bigl( 1 + u'(x) \bigr) + 2 V'\bigl( 2 + 2 u'(x)\bigr).
\end{equation}

Combining \eqref{eq:atom1}, \eqref{eq:atom0} and the BCs \eqref{eq:tbc1deps}, we get 
\begin{equation}
  - V'\Bigl(\frac{y_3^{\veps} - y_1^{\veps}}{\veps}\Bigr) - V'\Bigl(\frac{y_2^{\veps} -
    y_1^{\veps}}{\veps}\Bigr)  - V'\Bigl(\frac{y_2^{\veps} - y_0^{\veps}}{\veps}\Bigr) 
  = \veps f_1 + \veps f_0 + T. 
\end{equation}
To the leading order, this yields 
\begin{equation}\label{eq:contboundarycond}
  - V'\bigl( 1 + u'(0) \bigr) - 2 V'\bigl( 2 + 2 u'(0)\bigr) = T. 
\end{equation}
As the left hand side is equal to $n \cdot \partial_A W_{\CB}(u') \big
\vert_{x = 0}$, where $n$ is the unit exterior normal, the BC \eqref{eq:contboundarycond} is exactly the traction BC for the elastic energy density $W_{\CB}$.

\subsection{Linear stability at the equilibrium}
We also make the observe that the force balance equations \eqref{eq:forcebalanceeps}
with the traction BC \eqref{eq:tbc1deps} can be viewed
as a finite difference system with BCs. It is then
natural to analyze the stability of such a finite difference system,
similar in spirit to the analysis in the context of
atomistic-to-continuum methods \cite{LuMing:13, LuMing:14}.  We also
note that the stability is the crucial ingredient for the rigorous
proof of the continuum limit in \S\ref{sec:contlimit}.  The stability
of molecular statics models under periodic and Dirichlet BCs have been analyzed in \cite{EMing:2007a,ehrlacher2013analysis}.

To understand the stability, we linearize the force balance equations
\eqref{eq:forcebalance} at the equilibrium (undeformed) state, yielding,
\begin{equation}\label{eq:linearforcebalance}
  - \kappa_2 (u_{j+2} - 2u_j + u_{j-2}) -
  \kappa_1 (u_{j+1} - 2 u_j + u_{j-1}) = f_j, \qquad j \geq 0, 
\end{equation}
supplemented by the BC \eqref{eq:tbc1d} and
\eqref{eq:infbc1d}. Given $T$, we define the map $H_{T}: l^2(\NN) \to
l^2(\NN)$ as
\begin{equation}
  (H_{T} u)_j = - \kappa_2 (u_{j+2} - 2u_j + u_{j-2}) -
  \kappa_1 (u_{j+1} - 2 u_j + u_{j-1}), \qquad j \geq 0 
\end{equation}
with $u_{-1}$ and $u_{-2}$ determined by \eqref{eq:tbc1deps} (and
hence the dependence on $T$). Thus we have
\begin{equation}
  H_T(u)_j - H_0(u)_j = 
  \begin{cases}
    - \bigl(\kappa_2 ( 2 + \lambda ) + \kappa_1 \bigr)
    (1 - \lambda) T / \kappa,  & j = 0; \\
    - \kappa_2 (1 - \lambda) T / \kappa, & j = 1; \\
    0, & \text{otherwise}.
  \end{cases}
\end{equation}

Let us introduce a few short-hand notations. First we define the forward
difference as
\begin{equation}
  (D u)_j =  u_{j+1} - u_j. 
\end{equation}
Moreover, the discrete Laplacian is given by 
\begin{equation}
  (\Delta_d u)_j = 
  u_{j+1} - 2 u_j + u_{j-1}.
\end{equation}
Direct calculations yield,
\begin{multline}
  (\Delta_d \Delta_d u)_j + 4 (\Delta_d u)_j =
  \bigl( u_{j+2} - 4 u_{j+1} + 6 u_j - 4 u_{j-1} + u_{j-2} \bigr) +  
  4 \bigl(u_{j+1} - 2 u_j + u_{j-1}\bigr) \\
  = ( u_{j+2} - 2 u_j + u_{j-2}).
\end{multline}
With these preparations, we calculate the quadratic form $\average{u, H_0(u)}$:
\begin{equation}
  \begin{aligned}
    \average{u, H_0 u} & = - \kappa_2 \sum_{j=0}^{\infty} u_j \bigl(
    (\Delta_d \Delta_d u)_j + 4 (\Delta_d u)_j \bigr) - \kappa_1
    \sum_{j=0}^{\infty} u_j (\Delta u)_j \\
    & = - \kappa_2 \sum_{j=0}^{\infty} u_j (\Delta \Delta u)_j -
    \kappa \sum_{j=0}^{\infty} u_j (\Delta u)_j
  \end{aligned}
\end{equation}
Summation by parts gives (recall that $u_0 = 0$)
\begin{equation}
  \sum_{j=0}^{\infty} u_j (\Delta u)_j = - \sum_{j=0}^{\infty} \abs{ (Du)_j}^2.
\end{equation}
For the fourth order term, we get 
\begin{equation}
  \sum_{j=0}^{\infty} u_j (\Delta \Delta u)_j 
  = \sum_{j=0}^{\infty} \abs{ (\Delta u)_j }^2 - u_{-1} (u_1 + u_{-1}) 
  = \sum_{j=0}^{\infty} \abs{ (\Delta u)_j }^2 + \lambda(1 - \lambda) u_1^2,
\end{equation}
where in the last equality, we used $u_{-1} = - \lambda u_1$ in the
case $T = 0$ and $u_0 = 0$.

Finally, we have,  
\begin{equation}
  \average{u, H_0 u} = \kappa \sum_{j=0}^{\infty} \abs{ ( D u)_j}^2 - \kappa_2 \sum_{j=0}^{\infty} \abs{ (\Delta u)_j }^2 - \kappa_2 \lambda (1 - \lambda) u_1^2. 
\end{equation} 
By Lemma~\ref{lem:lambda}, we have $- \kappa_2 \lambda (1 - \lambda) >
0$ as long as $\kappa_1 > 0$ and $\kappa>0$. Thus, we have
\begin{equation}
  \average{u, H_0 u} \geq \kappa \sum_{j=0}^{\infty} \abs{ ( D u)_j}^2 - \kappa_2 \sum_{j=0}^{\infty} \abs{ (\Delta u)_j }^2
\end{equation}
Therefore, the scheme is stable as long as the underlying atomistic
model is stable.

For a general traction $T$, we have 
\begin{equation}
  \average{u, H_T u} = \average{u, H_0 u} - \frac{\kappa_2}{\kappa} (1- \lambda) T  u_1.
\end{equation}
Therefore, the stability follows from the stability in the case of $T
= 0$. 

\begin{remark}
This analysis also shows  that if $\lambda$ in \eqref{eq:tbc1d} is replaced by an appropriate approximation, i.e., $\widetilde{\lambda}\approx \lambda$ satisfying $-\kappa_2\wt{\lambda}(1-\wt{\lambda})>0$,  a stable model would also be obtained.
\end{remark}

\medskip

We show that a careless choice of the BC may lead to instability of the scheme. Instead of \eqref{eq:tbc1d}, let us consider
an alternative set of BCs (to distinguish, we use 
 $\wt{u}$ for the displacement)
\begin{subequations}\label{eq:tbc1d_unstable}
\begin{align}
  & \wt{u}_{-1} = (1+ \lambda^{-1}) \wt{u}_0 - \lambda^{-1} \wt{u}_1 + (\lambda^{-1} - 1) T / \kappa;  \\
  & \wt{u}_{-2} = (1+ \lambda^{-1}) \wt{u}_{-1} - \lambda^{-1} \wt{u}_0 +
  (\lambda^{-1} - 1) T / \kappa.
\end{align}
\end{subequations}
It is straightforward to check that this set of BC
also yields traction $T$ at the boundary
\begin{equation}
  - \kappa_1 (\wt{u}_0 - \wt{u}_{-1}) - \kappa_2 (\wt{u}_0 - \wt{u}_{-2}) - \kappa_2 (\wt{u}_1 - \wt{u}_{-1}) = T, 
\end{equation}
and hence consistent with the traction BC in continuum
elasticity. However, the resulting scheme with the BC
\eqref{eq:tbc1d_unstable} is not stable. In fact, we even lose uniqueness: it is easy to check that $\wt{u}_j = \lambda^j - 1$ for $j \geq -2$ satisfies
\begin{equation}
  - \kappa_2 (\wt{u}_{j+2} - 2 \wt{u}_j + \wt{u}_{j-2}) -
  \kappa_1 (\wt{u}_{j+1} - 2 \wt{u}_j + \wt{u}_{j-1}) = 0, \qquad j \geq 0,   
\end{equation}
and also at the boundary 
\begin{align}
  & \wt{u}_{-1} = (1+ \lambda^{-1}) \wt{u}_0 - \lambda^{-1} \wt{u}_1;   \\
  & \wt{u}_{-2} = (1+ \lambda^{-1}) \wt{u}_{-1} - \lambda^{-1} \wt{u}_0; \\
  & \wt{u}_0 = 0; \\
  & \limsup_{j \to \infty} \frac{\abs{\wt{u}_j}}{j} = 0. 
\end{align}

\subsection{Connection to BCs with applied forces at the boundary}

As we alluded to at the beginning of this section, it is also possible
to apply forces ($T_0$ and $T_1$) directly at the boundary to create a
traction. But it is not immediately clear how much forces to apply on
each of the two atoms at the boundary.  Here we will demonstrate the
connection to that approach. In particular, this discussion will also shed light on
the selection of the forces.

If we substitute \eqref{eq:tbc1d} into the force balance equation \eqref{eq:forcebalance}, we get 
\begin{align*}
  - V'(y_3 - y_1) - V'(y_2 - y_1) + V'(y_1 - y_0) & = f_1 + T_1; \\
  - V'(y_2 - y_0) - V'(y_1 - y_0) = f_0 + T_0.
\end{align*}
with 
\begin{align}
  T_1 & = \kappa_2 (u_{-1} - u_0) = \kappa_2 \bigl( (1 + \lambda) ( u_0 - u_1) + ( 1 - \lambda) T / \kappa \bigr); \\
  T_0 & = \kappa_2 (u_{-2} - u_0) + \kappa_1 (u_{-1} - u_0)  \\
  \notag & = \lambda \bigl( \kappa_1 + \kappa_2 ( 1 + \lambda) \bigr) (u_0 - u_1) + (1 - \lambda) \bigl( \kappa_1 + \kappa_2 ( 2 + \lambda) \bigr) T / \kappa.
\end{align}
This provides the formulas for the forces. An important observation,
however, is that these forces should depend on the displacement of the
atoms at $x_0$ and $x_1$.

\subsection{Traction BC from the Green's function}

To facilitate the extension of the BC to two
dimensional systems, we take yet another point of view of the traction
BC, from the lattice Green's function perspective.

Let us define the lattice Green's function associated with the model \eqref{eq:fdhalfchain},
\begin{equation}\label{eq:greens}
  - \kappa_2 (G_{j+2} - 2 G_j + G_{j-2}) -
  \kappa_1 (G_{j+1} - 2 G_j + G_{j-1}) = \delta(j), \qquad j \in \ZZ.
\end{equation}
In general, the lattice Green's functions are useful analytical tools
for studying lattice distortions around defects (see e.g.,
\cite{Tewary1973}). A typical route to compute the Green's function is
via a Fourier transform,
\begin{equation}
  \wh{G}(\xi) = \sum_{j \in \ZZ} e^{i \xi j} G_j, \qquad \xi \in [-\pi, \pi)
\end{equation}
with the inverse given by, 
\begin{equation}\label{eq:FourierInversion}
  G_j = \frac{1}{2\pi} \int_{-\pi}^{\pi} e^{-i \xi j} \wh{G}(\xi) \ud \xi
\end{equation}

This leads to 
\begin{equation}
  - \kappa_2 \bigl( e^{2i \xi} - 2 + e^{-2i \xi} \bigr) \wh{G}(\xi)
  - \kappa_1 \bigl( e^{i\xi} - 2 + e^{-i\xi} \bigr) \wh{G}(\xi) = 1,
\end{equation}
and
\begin{equation}
  \wh{G}(\xi) = \frac{1}{4 \kappa_2 \sin^2(\xi) + 4 \kappa_1 \sin^2(\xi / 2)}.
\end{equation}

However, due to the singularity at $\xi = 0$, the integral
\eqref{eq:FourierInversion} with $\wh{G}$ given above is not well
defined.  A remedy \cite{MaRo02} is to modify \eqref{eq:FourierInversion}:
\begin{equation}\label{eq:GreenFunction}
  \begin{aligned}
    G_j & = \frac{1}{2\pi} \int_{-\pi}^{\pi} \frac{e^{-i\xi j} - 1}{4
      \kappa_2 \sin^2(\xi) + 4 \kappa_1 \sin^2(\xi / 2)} \ud \xi \\
    & = - \frac{1}{2\pi} \int_{-\pi}^{\pi} \frac{2  \sin^2(\xi j/2)}{4
      \kappa_2 \sin^2(\xi) + 4 \kappa_1 \sin^2(\xi / 2)} \ud \xi. 
  \end{aligned}
\end{equation}
Conceptually, the Green's function
\eqref{eq:greens} is only defined up to a constant, and one can fix $G_0 =
0$ by subtracting a (infinite) constant from \eqref{eq:FourierInversion}.

As a result, the integral is now well defined as the integrand is regular as $\xi
\to 0$. The function $G_j$ defined this way still satisfies the equations  \eqref{eq:greens}.
We now make the connections to the BCs \eqref{eq:tbc1d}.

\begin{lemma} For $j \leq 0$, 
  \begin{equation}\label{eq:Gidentity}
    G_{j-1} = (1 + \lambda) G_j - \lambda G_{j+1}
  \end{equation}
\end{lemma}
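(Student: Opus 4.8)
The plan is to show that, on the range $j\le 0$, the sequence $G_j$ lies in the two-dimensional space spanned by the constant sequence and the decaying exponential $\lambda^{-j}$, and then to observe that every such sequence automatically satisfies the stated three-term recurrence. The second half is a one-line check and is best done first to see the target: \eqref{eq:Gidentity} rearranges to $\lambda G_{j+1}-(1+\lambda)G_j+G_{j-1}=0$, whose characteristic polynomial factors as $\lambda r^2-(1+\lambda)r+1=(r-1)(\lambda r-1)$, with roots $1$ and $1/\lambda$. Hence the constant sequence and $\lambda^{-j}=(1/\lambda)^{j}$ are precisely the two fundamental solutions of this recurrence, and any linear combination of them obeys \eqref{eq:Gidentity}.

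First I would record that, away from the source, $G$ solves the homogeneous problem: since the right-hand side of \eqref{eq:greens} vanishes for $j\ne 0$, the defining equation reduces to \eqref{eq:fdhalfchain} for every $j\le -1$, so on $j\le 0$ we may write $G_j=A+Bj+C\lambda^{j}+D\lambda^{-j}$ with $\lambda$ given by \eqref{eq:lambda}. The remaining task is to discard the two unwanted fundamental modes. Because $\abs{\lambda}<1$ by Lemma~\ref{lem:lambda}, the term $\lambda^{j}$ grows exponentially as $j\to-\infty$; ruling out this unphysical growth into the environment forces $C=0$. It then remains to eliminate the linear mode $Bj$, after which $G_j=A+D\lambda^{-j}$ for $j\le 0$ and \eqref{eq:Gidentity} follows at once from the factorization above.

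The main obstacle is exactly this last elimination of the linear term, and I expect it to be the crux. The homogeneous kernel contains both the constant and the linear sequence, and a nonzero coefficient $B$ would leave a residual $B(\lambda-1)$ on the right-hand side of \eqref{eq:Gidentity} — the very mechanism that produces the traction correction $(1-\lambda)T/\kappa$ in \eqref{eq:tbc1d}. To rule it out I would use the fact that $G$ is determined by \eqref{eq:greens} only up to homogeneous solutions, and fix the representative that stays bounded as $j\to-\infty$ into the environment, so that $G_{j+1}-G_j\to 0$ on the left. Verifying this requires returning to the regularized representation \eqref{eq:GreenFunction} and isolating the tail behavior of $G_j$ near the $1/\xi^2$ singularity of $\wh{G}$ at $\xi=0$, separating the genuinely non-decaying linear contribution from the constant piece pinned down by $G_0=0$. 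This tail estimate is the delicate point; once $B=0$ is secured, the verification that $G_j=A+D\lambda^{-j}$ satisfies \eqref{eq:Gidentity} is immediate.
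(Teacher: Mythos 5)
Your strategy is sound as a framework, and it is genuinely different from the paper's contour-integral argument: you solve the homogeneous recurrence on the left half, write $G_j = A + Bj + C\lambda^j + D\lambda^{-j}$ (valid for $j\le 1$, since \eqref{eq:greens} is homogeneous for $j\le -1$), discard $C$ by a growth bound, and observe that \eqref{eq:Gidentity} is then exactly equivalent to $B=0$. The gap is that the step you yourself flag as the crux is not merely delicate --- it fails. The function defined by \eqref{eq:GreenFunction} is even in $j$ (the integrand depends on $j$ only through $\sin^2(\xi j/2)$) and grows \emph{linearly} in both directions: near $\xi=0$ the integrand behaves like $(1-\cos(\xi j))/(\kappa\xi^2)$, and $\int_{-\pi}^{\pi}(1-\cos(\xi j))\,\xi^{-2}\ud\xi = \pi\abs{j}+O(1)$, so $G_j = -\abs{j}/(2\kappa)+O(1)$. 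Hence on the left half $B = 1/(2\kappa)\neq 0$, and no admissible normalization (the only freedom in \eqref{eq:GreenFunction} is an additive constant) can make $G$ bounded as $j\to-\infty$. Carrying your own scheme to completion therefore gives $G_{j-1}-(1+\lambda)G_j+\lambda G_{j+1} = B(\lambda-1) = -(1-\lambda)/(2\kappa)$ for $j\le 0$; in the language of \eqref{eq:tbc1d}, the Green's function satisfies the traction condition with $T=-1/2$ (half of the unit point force at the origin is transmitted across the boundary, as symmetry dictates), not $T=0$.

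What your (correct) reduction exposes is that the defect lies in the statement itself, not in your approach: the identity \eqref{eq:Gidentity} is false for the $G$ defined by \eqref{eq:GreenFunction}. Concretely, take $\kappa_1=1$, $\kappa_2=2$, so that $\lambda=-1/2$ and $\kappa=9$; then a direct evaluation of \eqref{eq:GreenFunction} gives $G_{\pm 1} = -\frac{1}{4\pi}\int_{-\pi}^{\pi}\frac{\ud\xi}{5+4\cos\xi} = -\frac16$, while \eqref{eq:Gidentity} at $j=0$ would force $G_{-1} = -\lambda G_1 = -\frac1{12}$. The paper's own proof goes astray precisely at the claim that the integrand is regular as $z\to 1$: the numerator $(z^{-j}-1)z$ has only a simple zero at $z=1$ against the double zero of $(z-1)^2$, so the integrand has a simple pole there, with residue $-j\kappa_2/\kappa$. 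Replacing the (principal-value) circle integral by $\lim_{\eps\to 0}\int_{\gamma_\eps}$ thus silently drops a half-residue, i.e., changes $G_j$ by $-j/(2\kappa)$ --- exactly the linear mode at issue. The identity does hold for the indented-contour function $G_j - j/(2\kappa)$, and your argument would prove that cleanly (that function is bounded as $j\to-\infty$, so its $B$ vanishes); but for the $G$ actually defined in \eqref{eq:GreenFunction} the correct statement must carry the extra constant $-(1-\lambda)/(2\kappa)$.
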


\begin{proof}
  Rewrite \eqref{eq:GreenFunction} using a change of variable $z =
  \exp(i \xi)$ and the characteristic polynomial associated to the
  denominator, we get
  \begin{equation*}
    \begin{aligned}
      G_j & = \frac{1}{2\pi \kappa_2 i}\int_{\abs{z} = 1} \frac{(z^{-j} - 1)z}{ (z - 1)^2 (z - \lambda) (z - \lambda^{-1}) } \ud z \\
      & = \frac{1}{2\pi \kappa_2 i} \lim_{\eps \to 0}
      \int_{\gamma_{\eps}} \frac{(z^{-j} - 1)z}{ (z - 1)^2 (z -
        \lambda) (z - \lambda^{-1}) } \ud z
    \end{aligned}
  \end{equation*}
  where the contour $\gamma_{\eps}$ is given by the boundary of
  $B_1(0) \backslash B_{\eps}(1)$ on the complex plane. The second
  equality uses the fact that the integrand is regular as $z \to 1$.

  Using this representation, we have 
  \begin{equation*}
    \begin{aligned}
      G_{j-1} - (1 + \lambda) G_j + \lambda G_{j+1} & = \frac{1}{2\pi
        \kappa_2 i} \lim_{\eps \to 0} \int_{\gamma_{\eps}}
      \frac{z^{-j} \bigl(z^2 - (1 + \lambda) z + \lambda \bigr)}{ (z -
        1)^2 (z -
        \lambda) (z - \lambda^{-1}) } \ud z \\
      & = \frac{1}{2\pi \kappa_2 i} \lim_{\eps \to 0}
      \int_{\gamma_{\eps}} \frac{z^{-j}}{ (z - 1) (z - \lambda^{-1}) } \ud z.
    \end{aligned}
  \end{equation*}
  As $j \leq 0$ and $\abs{\lambda} < 1$, the integrand is holomorphic
  in $B_1(0) \backslash B_{\eps}(1)$ for any $\eps$, and hence the
  integral vanishes for any $\eps$ by Cauchy's integral
  theorem. Therefore, \eqref{eq:Gidentity} holds.
\end{proof}

The equation \eqref{eq:Gidentity} is exactly in the same form as the BCs
\eqref{eq:tbc1d} when $T=0$. This is not surprising, since the Green's function represents a special set of solutions. In particular, $G_j$ satisfies the homogeneous difference equations \eqref{eq:fdhalfchain}. Nevertheless, this simple observation can be employed to determine the coefficients in the BCs by using the Green's functions as test functions. This will be implemented for problems in two-dimensions, and the implementation will be discussed in the next sections.

\section{Implementation in two-dimensional models}\label{sec: 2d}

Here we demonstrate how the BC can be extended to two-dimensional systems.

\subsection{The traction BC and the induced boundary value problem}

For multi-dimensional problems, the BC is typically non-local \cite{MeKaLi06}, in that the displacement
of all the atoms at the boundary is coupled. It is also possible to consider nonlocal boundary
conditions, for example, in the spirit of the boundary element method
for molecular static models by one of the authors \cite{Li:12}.
Another alternative is to seek a {\it local} BC, in the sense that the position
of the ghost atoms are only determined by the positions of nearby
atoms in the system.  To
make the dependence local, we would employ a ``local flattening'' of
the boundary. Roughly speaking, for an atom at the boundary, the position is determined by a homogeneous
approximation of the local atom configuration with the local value of
the traction tensor.

To better explain the idea, we consider the face-centered cubic (FCC) lattice of Aluminum with the
axis aligned in $\langle 11 0\rangle$, $\langle 0 0 1\rangle$ and $\langle 1\bar{1} 0\rangle$ orientations. 
When projected to the $\langle 1\bar{1} 0\rangle$ plane, the lattice spacing in the horizontal and vertical directions are 
$\frac{a_0}{\sqrt{2}}$ and $\frac{a_0}{2}$, respectively, which makes it look like a triangular lattice,
as shown in Fig. \ref{fig: Al2d}. Again, we introduce ghost atoms outside the boundary in order to achieve the
desired traction condition. They are represented by open circles in Fig.  \ref{fig: Al2d}. 
\begin{figure}[htp]
\begin{center}
\scalebox{0.4}{\includegraphics{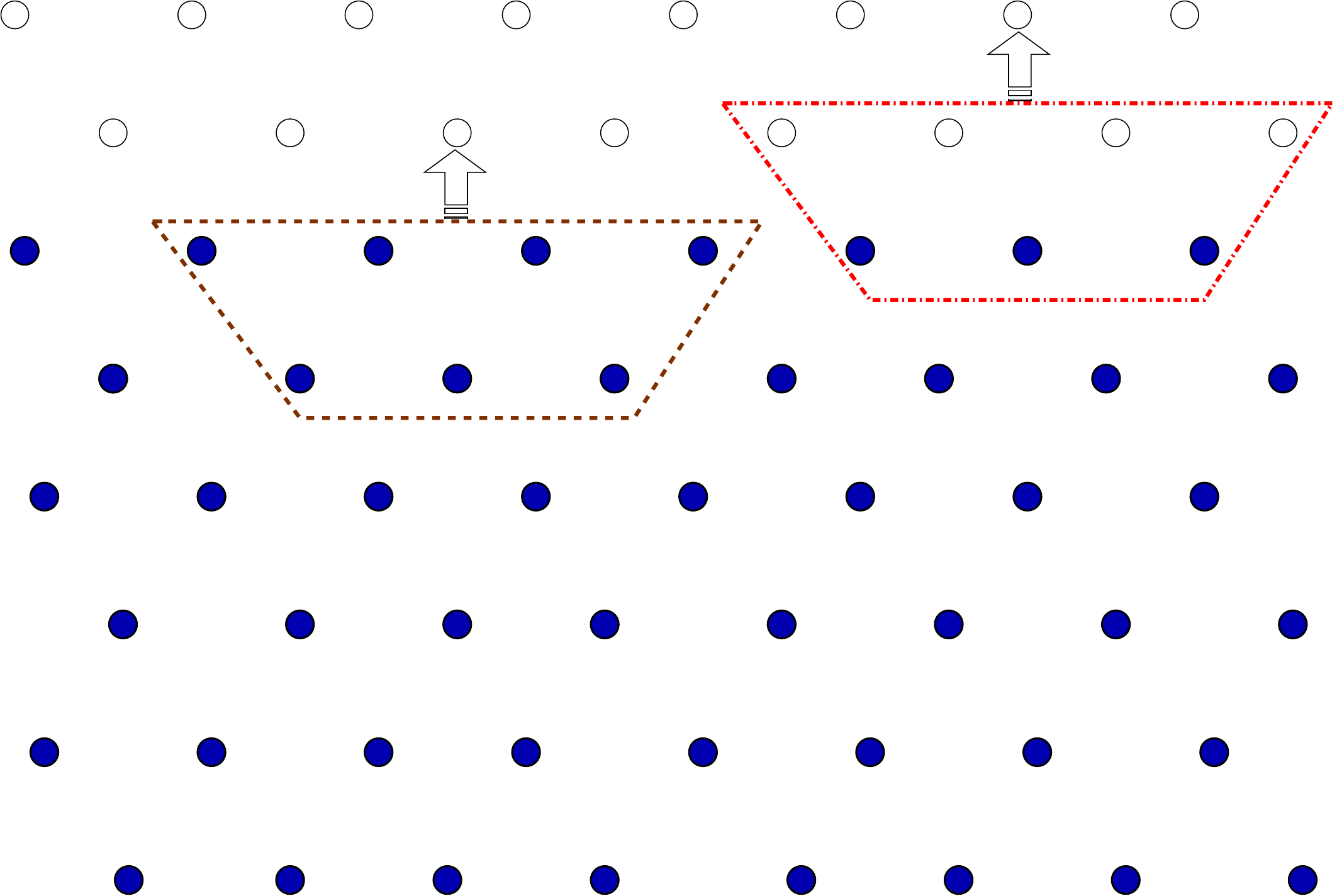}}
\end{center}
\caption{The projected atomic position of a FCC lattice. Filled circles: Atoms in the interior; Open circles: 
the ghost atoms introduced outside the boundary. The boxes contain the set of atoms $S_j$ that will be used to determine the displacement of the $j$th atom (see equation \eqref{eq: loc-bc-2d}). }
\label{fig: Al2d}
\end{figure}

Our main goal is to determine the actual position of the ghost atoms
based on the displacement of the atoms in the interior and the applied
traction $T$, which is a two-dimensional vector. In this case, it is
in general cumbersome to obtain the exact boundary
condition. Motivated by the one-dimensional traction BC, we seek an approximate BC in the following
form,
\begin{equation}\label{eq: loc-bc-2d}
  u_j = \sum_{i\in S_j} B_{ji} u_i + p_j.
\end{equation}

The shift vector $p$ is similar to the non-homogeneous term in
\eqref{eq:tbc1d}, and it will be determined so that the correct
traction is obtained. In the case when $p=0$, this boundary condition
would coincide with the BCs that models an environment that is at a
mechanical equilibrium with zero stress \cite{Li2009b,MeKaLi06}. In
principle, an exact BC in this form can be derived, e.g., in
\cite{MeKaLi06}, which mathematically, is a discrete analogue of the
Dirichlet-to-Neumann (DtN) map. The exact expression is typically
nonlocal, in that the summation is over all the atoms near the
boundary. But here we choose a local approximation, and restrict the
summation in \eqref{eq: loc-bc-2d} to those atoms that are close to
the $j$ atoms.  These neighbors are collected by the set $S_j.$ Due to
the translational symmetry of the lattice, we will use the same set of
neighbors when implementing the formula \eqref{eq: loc-bc-2d}, which
is also demonstrated in Fig. \ref{fig: Al2d}.  More specifically, we
start with the layer of ghost atoms closest to the boundary and apply
the BCs \eqref{eq: loc-bc-2d}. Once the displacement of these atoms
are updated, we move to the next layer, and these steps will be
repeated until the position of all the ghost atoms are updated.

We now discuss how to determine the coefficients $B_{ji}$. Since they are independent of the applied traction, they can be computed for the case $T=0$ and $p=0.$ In this case, these coefficients can be determined
using an optimization procedure, developed in \cite{Li2009b}. More specifically, we choose an objective function as follows, 
\begin{equation}
 \min h, \quad  h= \sum_{k} |G_{j,k} - \sum_i B_{ji} G_{i,k}|^2.
\end{equation}
Here $G_{j,k}$ is the two-dimensional lattice Green's function  \cite{Tewary1973}. 
The main observation is that the BC should by satisfied by special solutions, especially the lattice Green's functions $G_{i,k}$, which corresponds to the solution of the linearized molecular statics model when a point force is applied on the $k$th atom. This was already observed for the one-dimensional model. Ideally, the objective function would be zero when the BC is exact.  Further, we introduce a constraint,  
\begin{equation}
 \sum_i B_{ji}=I,
\end{equation}
so that the constant solutions are admitted. This is also seen in the
one-dimensional system: The two coefficients in \eqref{eq:tbc1d} add
up to 1.

It remains to estimate the vector $p$. In principle, it should be determined by requiring the traction to arrive at a prescribed value. The total tractions along the boundary is given by the sum of the forces \cite{Admal2010unified,WuLi14},
\begin{equation}
  t= \sum_{i\in \Omega, j\notin \Omega} f_{ij}.
\end{equation}
Here, $f_{ij}$ comes from a force decomposition. Namely, 
\begin{equation}\label{eq: fij}
  f_i= \sum_{j\ne i} f_{ij},  \quad f_{ij} = -f_{ji}.  
\end{equation}
This formula, which was already indicated by \eqref{eq:tractioncondition}, is consistent with the intuition of Cauchy. The explicit expressions  for the force decomposition \eqref{eq: fij}, especially for multi-body potentials, can be found in \cite{Chen2006local,WuLi14}.

To control the local traction, we divide the region with ghost atoms into blocks, each denoted by $\Omega^c_\alpha,$
$\alpha=1, 2, \cdots, M$. The computational domain is denoted by $\Omega,$ and the intersection with $\Omega^c_\alpha$, is written as $\partial \Omega_\alpha.$ For each block $\Omega^c_\alpha,$ we introduce a shift vector $p_\alpha$. They are chosen so that the traction along $\partial \Omega_\alpha$ agrees with a prescribed value $t_\alpha$. This arrangement is illustrated in Fig. \ref{fig: blocks}.
\begin{figure}[htp]
\begin{center}
\scalebox{0.4}{\includegraphics{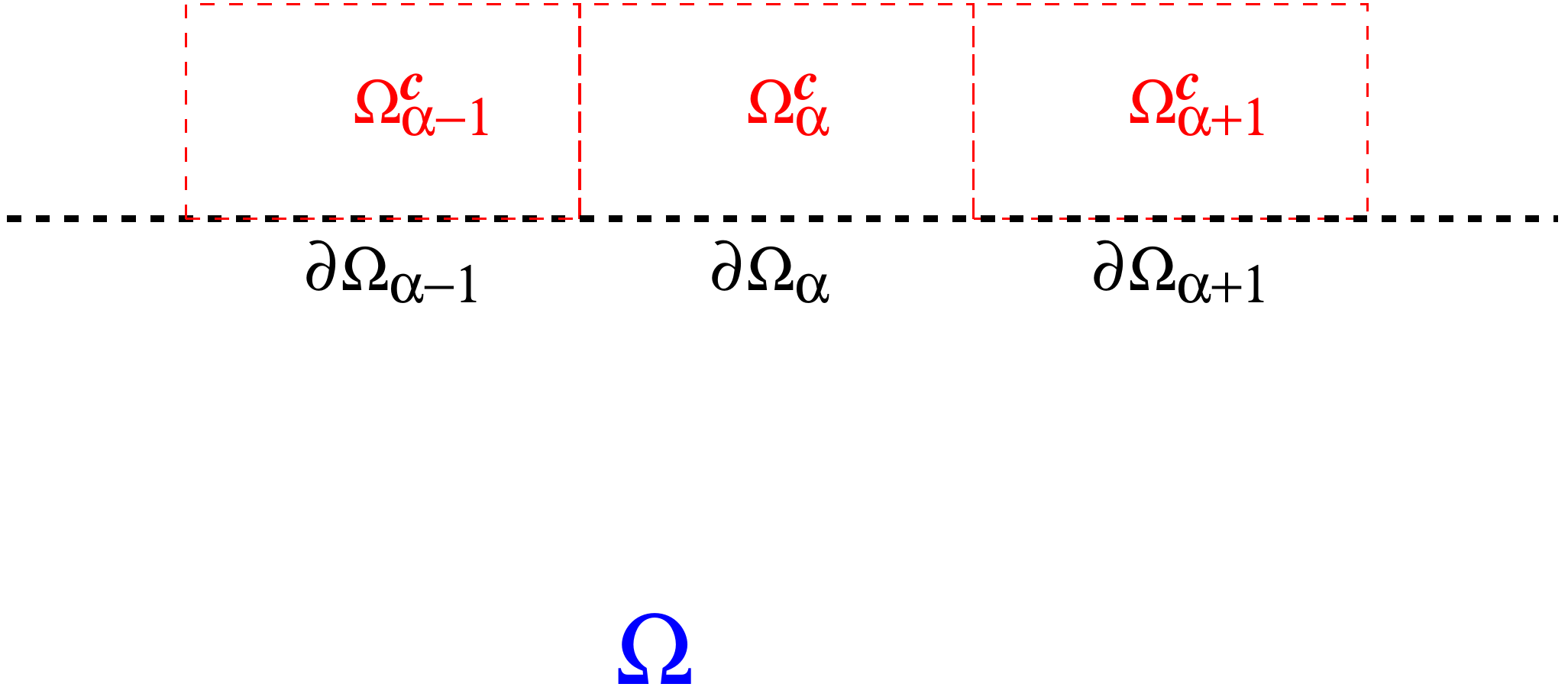}}
\end{center}
\caption{Imposing tractions along the boundary: $\Omega$ indicate the computational domain. The atoms outside the boundary are grouped into blocks $\Omega_\alpha^c$ with the intersection with the boundary given by $\partial\Omega_\alpha$. The traction on each  $\partial\Omega_\alpha$ is prescribed. }
\label{fig: blocks}
\end{figure}

We now put the mathematical models together. 
\begin{equation}\label{eq: model2d}
 \left\{
 \begin{aligned}
  & \frac{\partial}{\partial u_i} V=0,  &&\forall x_i \in \Omega, \\
  & u_j = \sum_{i} B_{ji} u_i + p_\alpha, & &\forall x_j \in \partial \Omega_\alpha,\\
  & \sum_{i\in \Omega, j\in \Omega_\alpha^c} f_{ij} = t_\alpha.&
  \end{aligned}\right.
\end{equation} 
The first set of equations represent the force balance in the interior, with potential energy given by $V$. The remaining equations serve as BCs with prescribed tractions $t_\alpha$.  The unknowns are the atomic displacement, together with the shift vectors $p_\alpha$.
The atomic degrees of freedom associated with the atoms outside $\Omega$ has been implicitly taken into account by
the second and third equations in \eqref{eq: model2d}. In the next section, we will discuss 
an implementation method.

 \subsection{Numerical implementations}

Our reduced model \eqref{eq: model2d} consists of a set of nonlinear algebraic equations. It is therefore natural
to make use of iterative methods, such as the quasi-Newton's method. In general, this requires the approximation
of the Jacobian matrix, since the analytical form is usually not available. 
The convergence is typically slow, especially when the system is not well prepared. 

To find an alternative, we notice that in the domain $\Omega$, the molecular statics model is associated with 
an energy. Therefore, for Dirichlet BCs, where the atoms outside the boundary are held fixed, 
the solution to the molecular statics model correspond to an energy minimization, which is usually more robust and
much more efficient than solving the nonlinear equations. 
 
We implement the equations by a domain decomposition approach, and
alternate among the three sets of equations in \eqref{eq: model2d}. As
an example to explain the idea, we may consider the coupling of the
first two sets of equations and assume that $p_\alpha$ is given.  We
create a few overlapping layers, in which the atoms serve as both
ghost atoms and interior atoms.  This is illustrated by Fig. \ref{fig:
  DD2d}. In each iteration, we first update the displacement of all
the ghost atoms including those in the overlapping region using
\eqref{eq: loc-bc-2d} (or the second equation in \eqref{eq: model2d}).
We then turn to the interior atoms, assuming that other atoms are held
fixed (open circles in Fig. \ref{fig: DD2d}). By minimizing the
energy, we obtain the updated position of the interior atoms. The
numerical implementation has been done with the BFGS package
\cite{Liu1989lbfgs}. This iteration can be continued until convergence
is reached. This is simply the Schwartz iteration
\cite{toselli2005domain}) between the two models.
 
 \begin{figure}[htp]
 \begin{center}
\scalebox{0.35}{\includegraphics{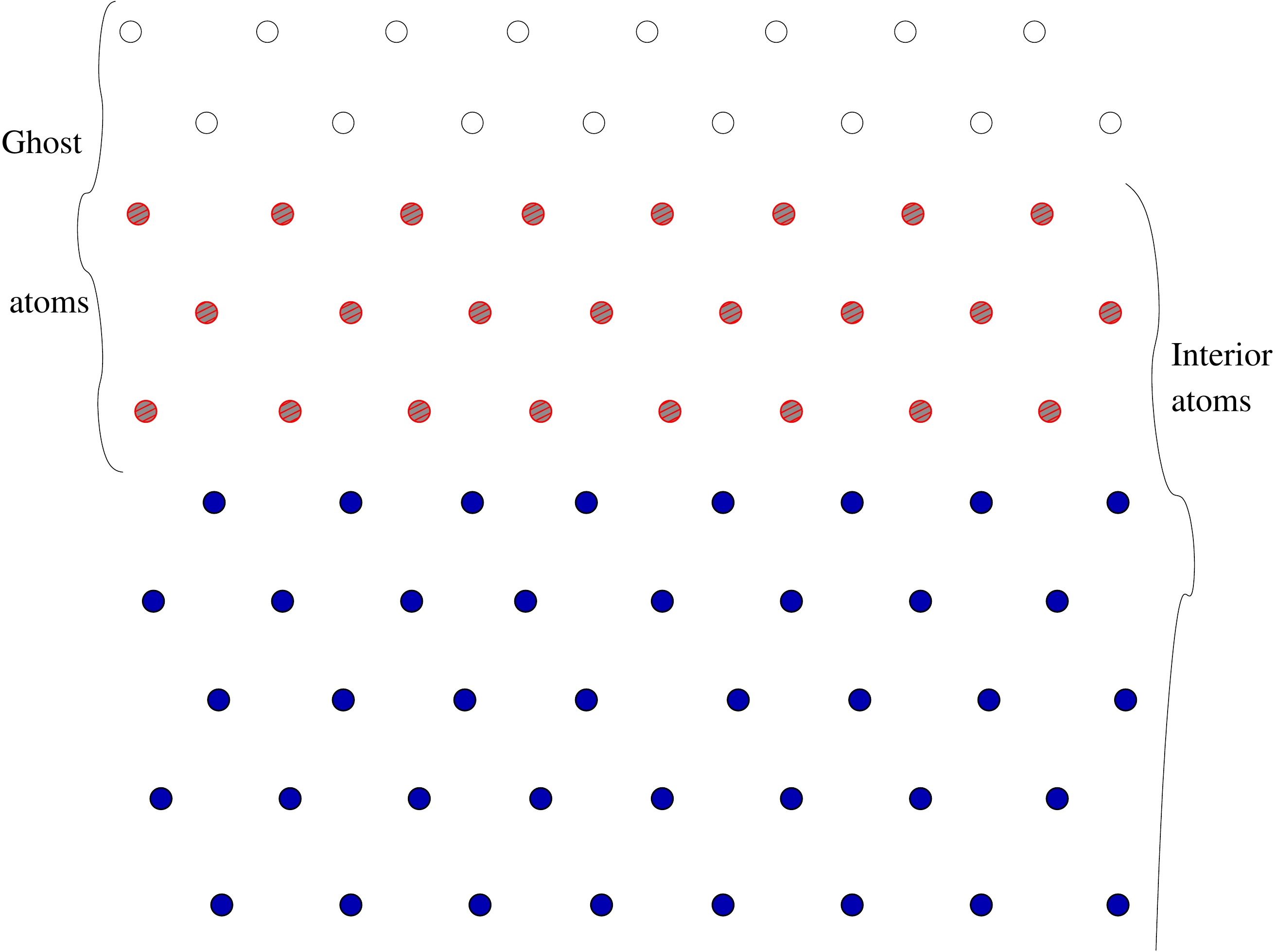}}
\caption{An overlapping domain decomposition.}
\label{fig: DD2d}
\end{center}
\end{figure}

\subsection{Results from numerical experiments}

As a test problem, we consider a dislocation dipole under a shear load. The atoms around the two dislocations with opposite Burgers vectors are shown in
Fig. \ref{fig: dpl}. The embedded atom model (EAM) \cite{ercolessi1994interatomic} has been used as the interatomic potential. 
\begin{figure}[htpb]
\begin{center}
\scalebox{0.45}{\includegraphics{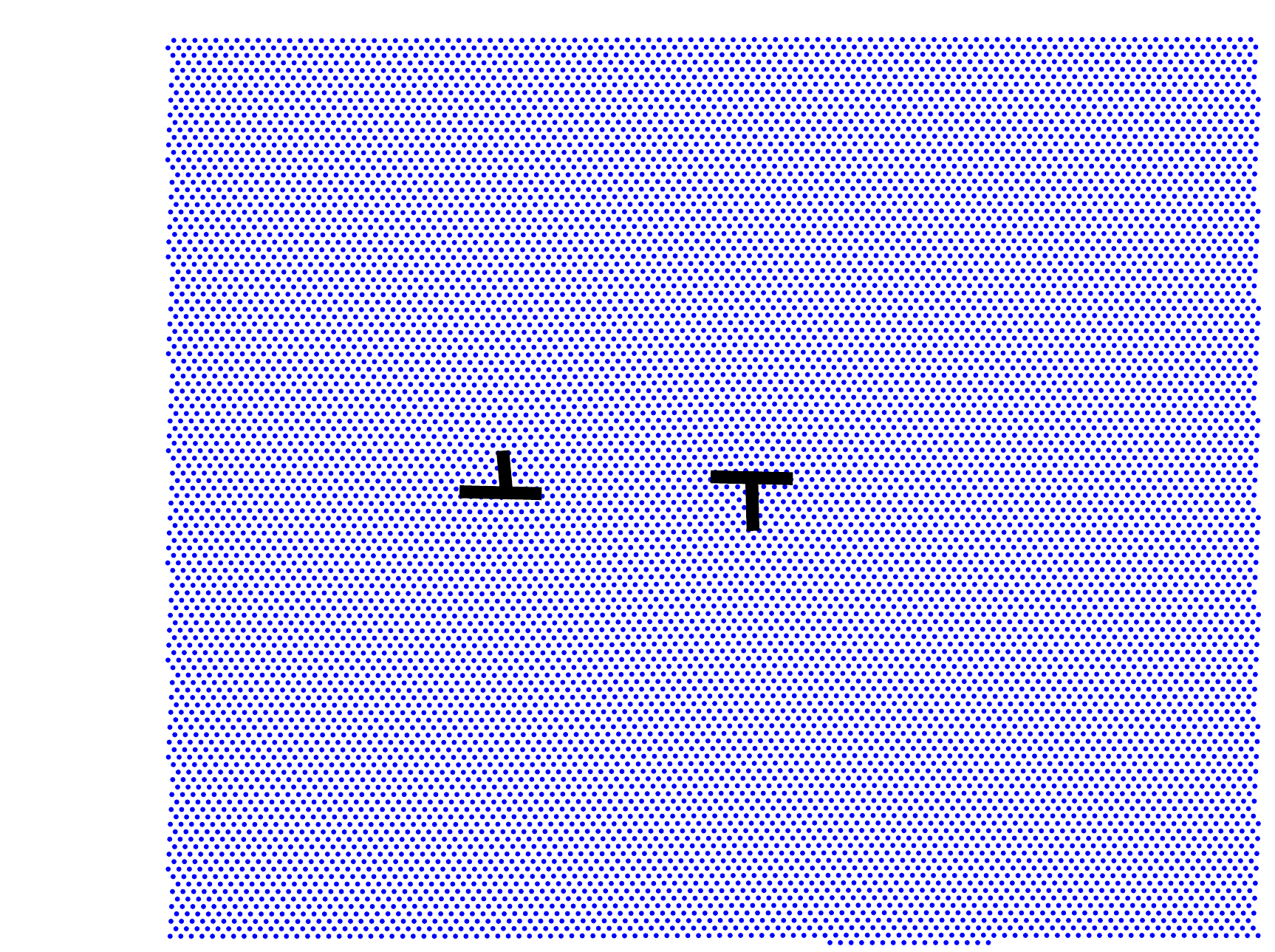}}
\caption{The atoms near the two dislocations.}
\label{fig: dpl}
\end{center}
\end{figure}

We first manually control the vector $p=(p_1,0),$ and observe the influence on 
the traction changes.  As a quasi-static loading, we increase $p_1$ with small increment and
then solve the molecular statics model using the domain decomposition method described in the
previous section. For each step, we also compute the traction
at the boundary. The history of the total boundary traction is shown in Fig. \ref{fig: thist}. We observe that the traction increases as $p_1$ increases. However, when $p_1$ reaches certain critical value,  a sudden drop is observed. In this case, the two 
dislocations move to the boundary, and the entire sample undergoes a complete slip. Fig. \ref{fig: positions} shows the atomic positions before and after the slip.
\begin{figure}[hpbt]
\begin{center}
\scalebox{0.45}{\includegraphics{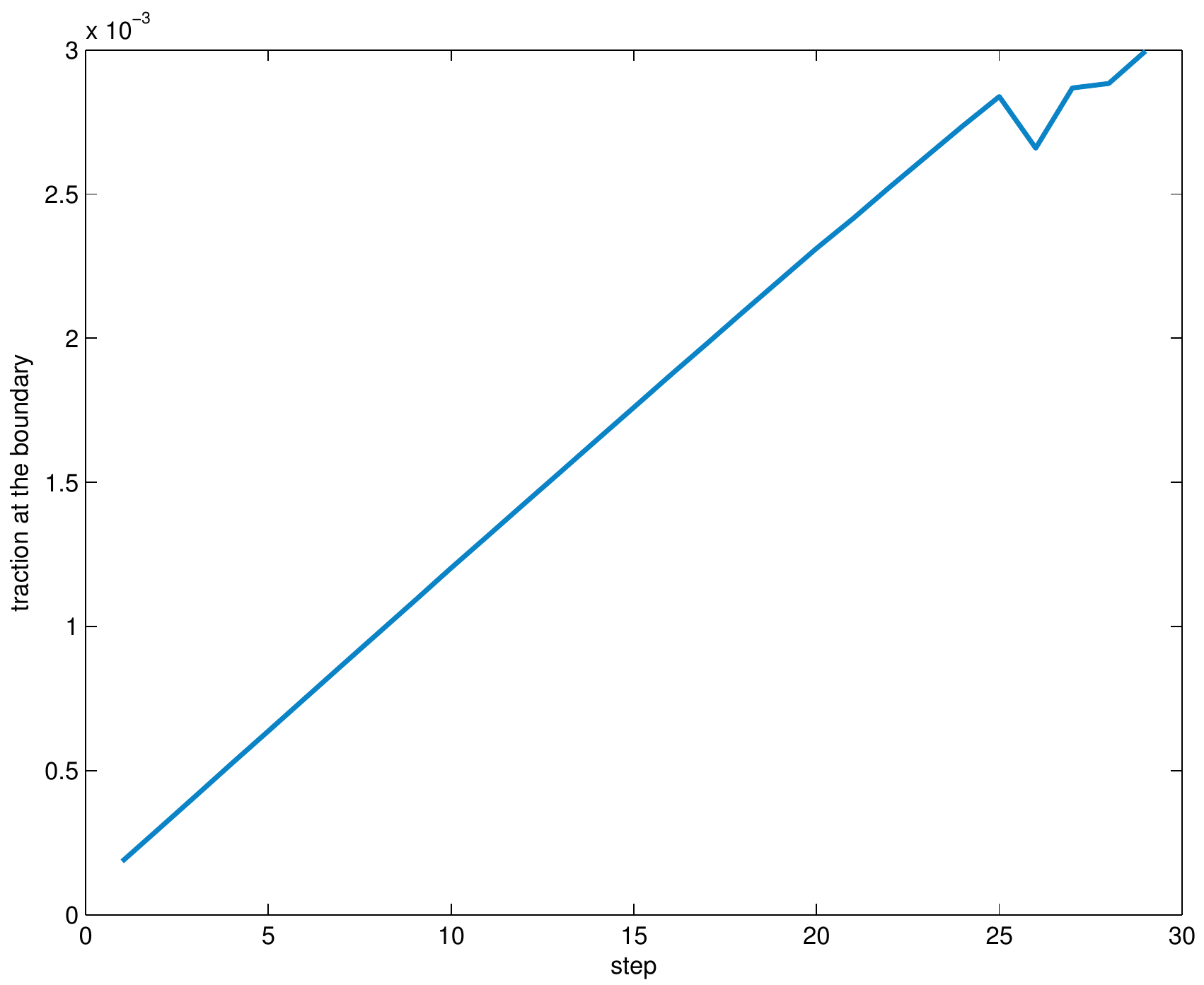}}
\caption{The history of the traction at the boundary.}
\label{fig: thist}
\end{center}
\end{figure}

\begin{figure}[htbp]
\begin{center}
\scalebox{0.85}{\includegraphics{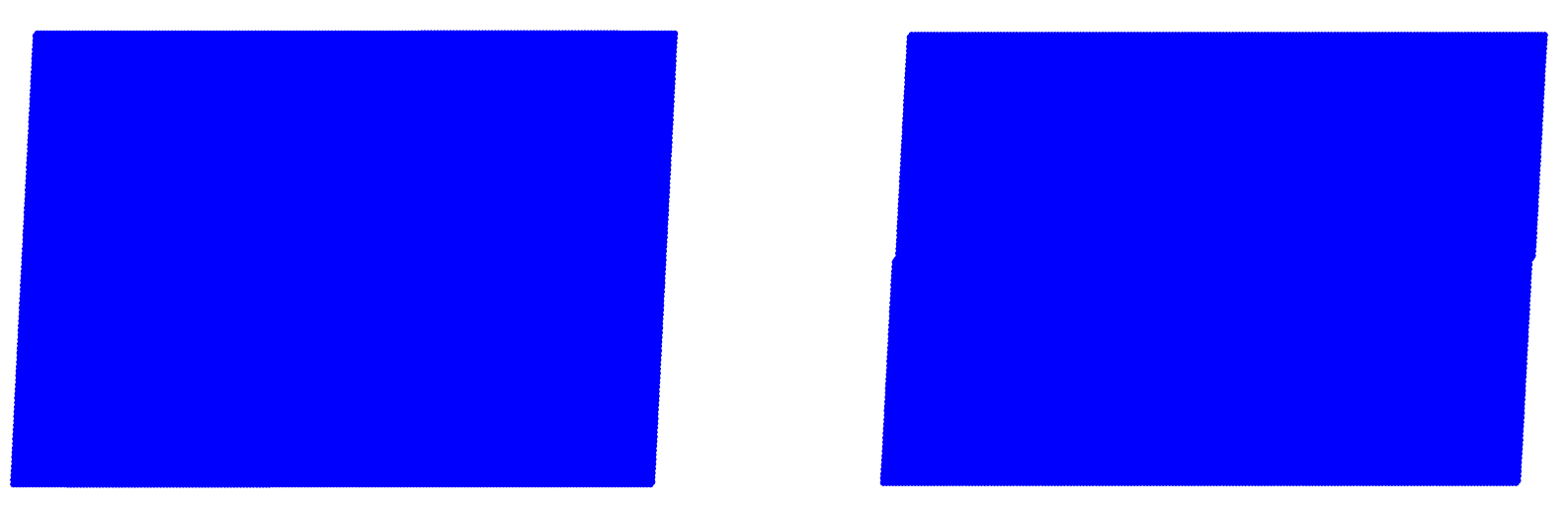}}
\caption{The position of the atoms before and after the yield stress.}
\label{fig: positions}
\end{center}
\end{figure}

In the next experiment, we apply a uniform traction along the upper and lower boundaries. 
In Fig. \ref{fig: tpvals}, we show the resulting values of $p_1$ along the boundary. While, the resulting tractions
have reached the prescribed values $(2\times 10^{-4},0)$, it is clear that
the values for $p$ are not homogeneous, mainly due to the presence of the dislocations. The displacement is shown in Fig. \ref{fig: u1}, together with a close-up view of the atomic positions. All these results suggest that the atomic positions are not uniform.
Compared to the simulations of dislocation dipoles  using the Parrinello-Rahman method (e.g.,\cite{wang2004calculating}),
the current approach does not introduce periodic images of the dislocation dipole. Moreover, the uniform shear stress can be applied without forcing a uniform deformation along the boundaries. 

\begin{figure}[htpb]
\begin{center}
{\includegraphics[width=5in,height=2.2in]{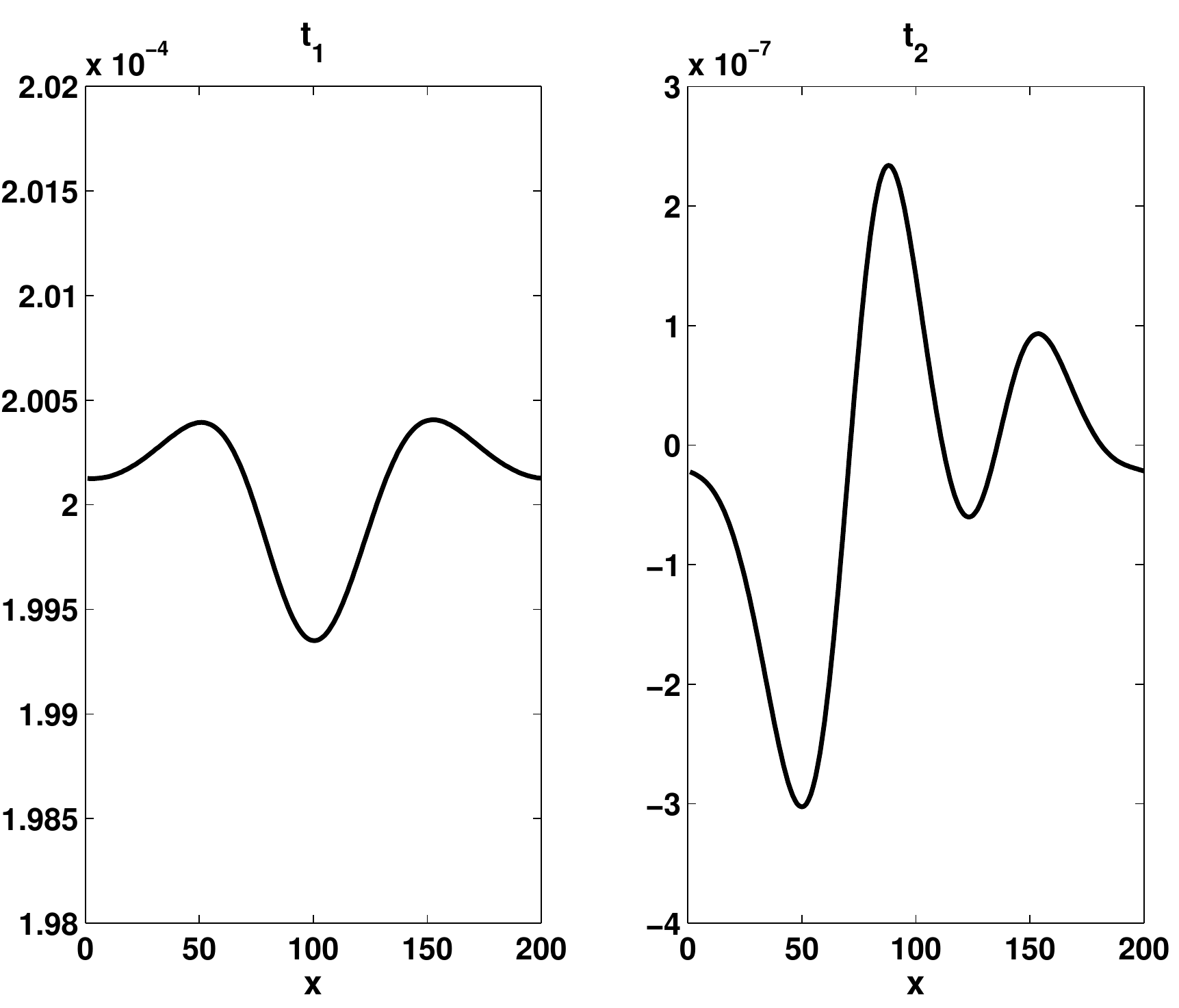}}\\
{\includegraphics[width=5in,height=2.2in]{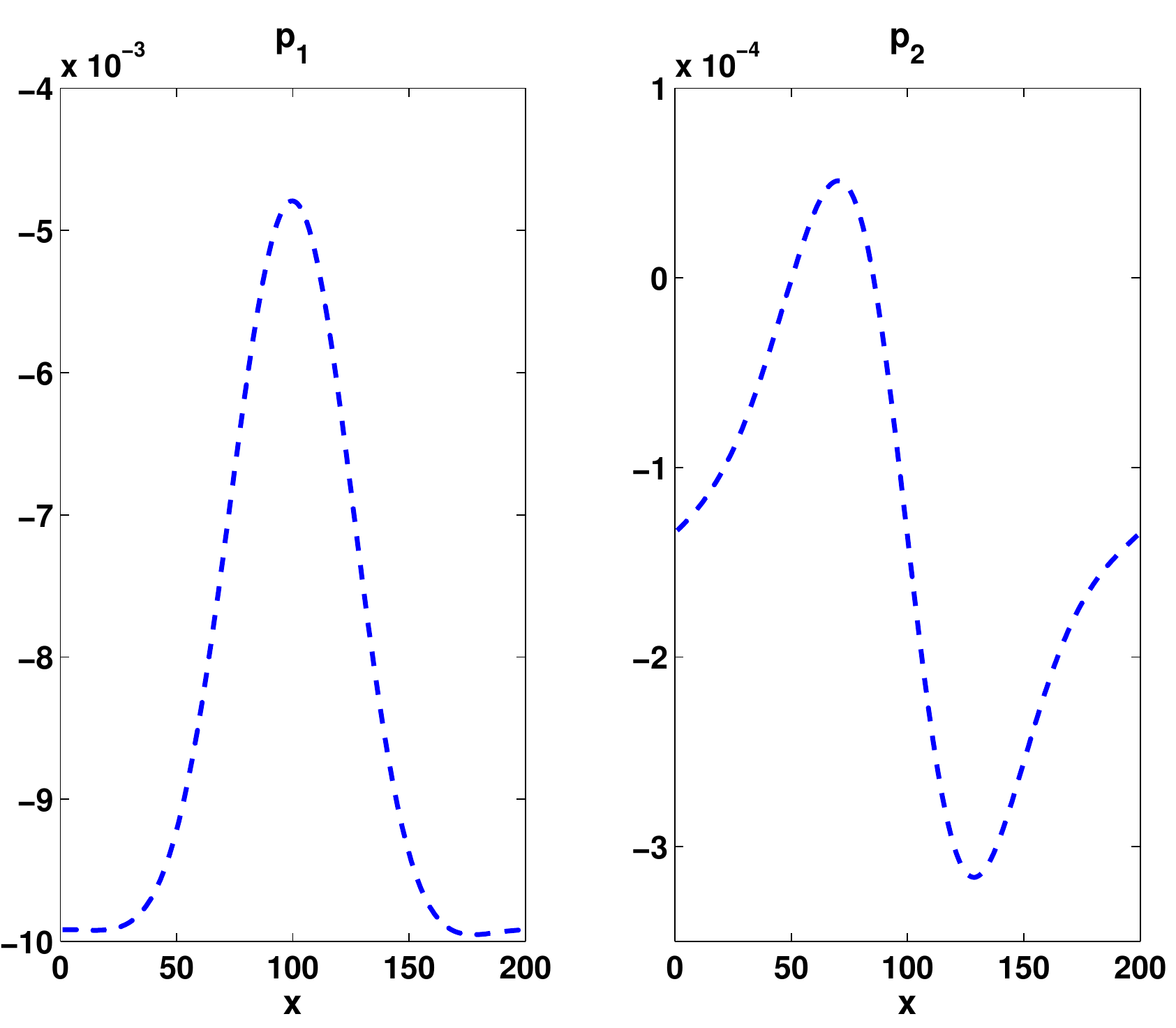}}
\caption{The traction (top) and the shift vector $p_1$ (bottom) along the upper boundary.}
\label{fig: tpvals}
\end{center}
\end{figure}

\begin{figure}[hbtp]
\begin{center}
{\includegraphics[scale=0.17]{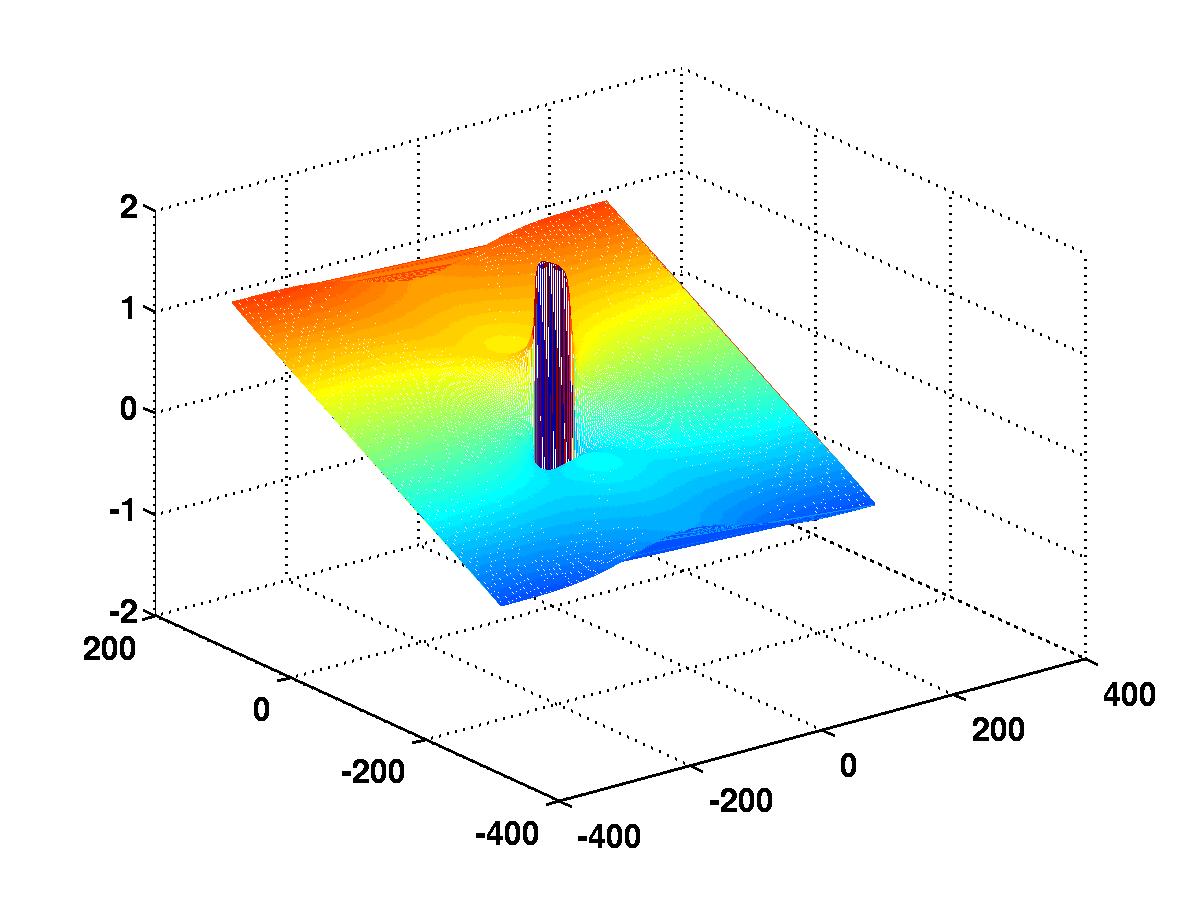}}
{\includegraphics[scale=0.32]{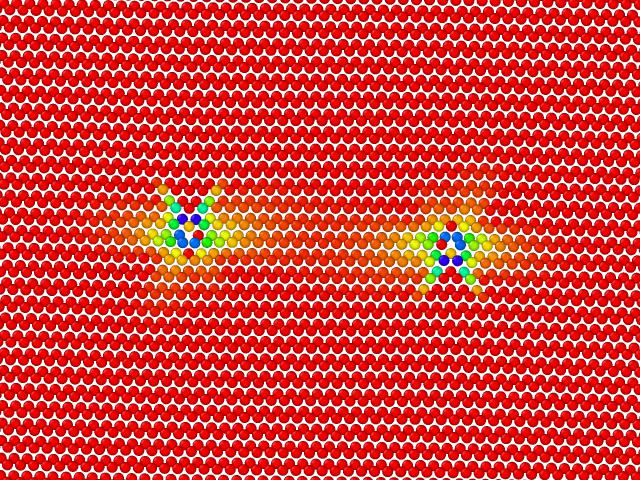}}
\caption{The displacement $u_1$ and a close-up view of the atoms near the two dislocations (generated in Ovito \cite{stukowski2010visualization}).}
\label{fig: u1}
\end{center}
\end{figure}

\section{Summary and Discussion}
We have formulated boundary conditions that impose a traction force on
a molecular statics model.  These boundary conditions are derived by
taking into account the surrounding elastic environment. Hence, the
computational domain is part of a much bigger sample, and artificial
boundary effects can be eliminated. In the continuum limit, these
boundary conditions coincide with the Neumann boundary condition for
continuum elasticity models.  We have restricted our discussions to
static problems. Extension to dynamic problems will be considered in
future works.

\medskip

\noindent\textbf{Acknowledgment.} 
The work of J.L. was supported in part by the Alfred P. Sloan
foundation and National Science Foundation under award DMS-1312659.

\bibliographystyle{amsxport}
\bibliography{tbc}

\end{document}